%%%%%%%%%%%%%%%%%%%%%%%%%%%%%%%%%%%%%%%%%%%%%%%%%%%%%%%%%%%%%%%%%%%%%%%%%%%%%%%%
%2345678901234567890123456789012345678901234567890123456789012345678901234567890
%        1         2         3         4         5         6         7         8

\documentclass[letterpaper, 10 pt, conference]{ieeeconf}  
% Comment this line out if you need a4paper

%\documentclass[a4paper, 10pt, conference]{ieeeconf}      
% Use this line for a4 paper

\IEEEoverridecommandlockouts                              
% This command is only needed if 
% you want to use the \thanks command

\overrideIEEEmargins                                      
% Needed to meet printer requirements.

%In case you encounter the following error:
%Error 1010 The PDF file may be corrupt (unable to open PDF file) OR
%Error 1000 An error occurred while parsing a contents stream. Unable to analyze the PDF file.
%This is a known problem with pdfLaTeX conversion filter. The file cannot be opened with acrobat reader
%Please use one of the alternatives below to circumvent this error by uncommenting one or the other
%\pdfobjcompresslevel=0
%\pdfminorversion=4

% See the \addtolength command later in the file to balance the column lengths
% on the last page of the document

% The following packages can be found on http:\\www.ctan.org
\usepackage{graphicx} % for pdf, bitmapped graphics files
\usepackage{times} % assumes new font selection scheme installed
\usepackage{amsmath} % assumes amsmath package installed
\usepackage{amssymb}  % assumes amsmath package installed
\usepackage{color}
\usepackage[noadjust]{cite}
%\usepackage{caption}
%\usepackage{subcaption}

%%%%%%%%%%%%%%%%%%%%%%%%%%%%%%%%%%%%%%%%%%%%%%%%%%%%%%%%%%%%%%%%%%%%%%%%%%%%

\newcommand{\R}{\mathbb{R}}

\newcommand{\Lc}{\mathcal{L}}
\newcommand{\inprod}[2]{\left\langle#1, #2\right\rangle}

\DeclareMathOperator*{\argmin}{argmin}

% Needed to use with ieeeconf

%%%%%%%%%%%%%%%%%%%%%%%%%%%%%%%%%%%%%%%%%%%%%%%%%%%%%%%%%%%%%%%%%%%%%%%%%%%%

\usepackage{amsthm}

\theoremstyle{plain}
\newtheorem{thm}{Theorem}

\newtheorem{lem}[thm]{Lemma}

\theoremstyle{definition}

\theoremstyle{remark}
\newtheorem{rem}{Remark}

%%%%%%%%%%%%%%%%%%%%%%%%%%%%%%%%%%%%%%%%%%%%%%%%%%%%%%%%%%%%%%%%%%%%%%%%%%%%

\title{\LARGE \bf
Continuous-time Dynamic Realization for Nonlinear Stabilization via Control Contraction Metrics
}

\author{Ruigang Wang and Ian R. Manchester% <-this % stops a space
\thanks{This work was supported by the Australian Research Council.}
% <-this % stops a space
\thanks{The authors are with the Australian Centre for Field Robotics, The University of Sydney, Sydney, NSW 2006, Australia
        (e-mail: {\tt\small  ian.manchester@sydney.edu.au}).}%
}

%%%%%%%%%%%%%%%%%%%%%%%%%%%%%%%%%%%%%%%%%%%%%%%%%%%%%%%%%%%%%%%%%%%%%%%%%%%%
\begin{document}

\maketitle
\thispagestyle{empty}
\pagestyle{empty}

\begin{abstract}
	 Nonlinear stabilization using control contraction metric (CCM) method usually involves an online optimization problem to compute a minimal geodesic (a shortest path) between pair of states, which is not desirable for real-time applications. This paper introduces a continuous-time dynamic realization which distributes the computational cost of the optimization problem over the time domain. The basic idea is to force the internal state of the dynamic controller to converge to a geodesic using covariant derivative information. A numerical example illustrates the proposed approach.
\end{abstract}
%%%%%%%%%%%%%%%%%%%%%%%%%%%%%%%%%%%%%%%%%%%%%%%%%%%%%%%%%%%%%%%%%%%%%%%%%%%%
\section{Introduction}

Stabilization of arbitrary trajectories of nonlinear dynamical systems is a challenging problem. One solution is to linearize the dynamics around the equilibrium manifold and apply the linear parameter-varying (LPV) control design methods \cite{Rugh:2000}. However, these approaches generally lack global stability guarantees for the closed-loop nonlinear system. Another approach is to apply nonlinear model predictive control (NMPC) \cite{Allgower:2012}, which solves an optimal control problem (OCP) in a moving horizon way. Due to the complex dynamic constraints, the computational cost often limits its applications in real-time systems.

Contraction theory \cite{Lohmiller:1998} is an attractive tool for the nonlinear stabilization problem because it provides formal global stability guarantees of the nonlinear system via simple local linear analysis. The underlying idea is to integrate the local stability results along a geodesic (a shortest path w.r.t. certain Riemannian metric) connecting the measured and reference states. Extensions to control design were developed in \cite{Manchester:2017, Manchester:2018} by introducing the concept of control contraction metric (CCM). Specifically, a CCM is a Riemannian metric for which the Riemannian energy functional of the geodesic between the measured and reference states can be made to decrease exponentially by choosing proper control action. Thus, the CCM can be understood as a differential version of control Lyapunov function (CLF). Further extensions to distributed control can be found in \cite{Wang:2017, Shiromoto:2018}. Connections and comparisons with LPV based control was discussed in \cite{Wang:2019}.

A static state-feedback realization based on integration along a geodesic was proposed in \cite{Manchester:2017}. Implementation of this controller involves solving an optimization problem to find a geodesic. This online computation is similar to NMPC, but is of lower dimension without dynamic constraints. There exist some indirect methods for geodesic computation, such as phase flow method \cite{Ying:2006}, fast marching \cite{Kimmel:1998} and graph cuts \cite{Boykov:2003}. One drawback of these approaches is the small convergence radii. Direct methods construct a finite-dimensional approximation of the online OCP and solve it via nonlinear programming (NLP). Typical discretization methods include single/multiple shooting \cite{Houska:2011} and global pseudospectral \cite{Garg:2010}. Recently, an efficient approach using the Chebyshev pseudospectral was proposed in \cite{Leung:2017}. Although the computational time is significantly reduced compared with the shooting method, online optimization is still not desirable for time-critical applications.

In this paper, we propose a continuous-time dynamic realization approach to address this issue. Inspired by a recent continuous-time MPC scheme \cite{Feller:2014,Nicotra:2018}, the proposed approach makes continuous improvements to the integral path rather than solving a full optimization problem online. Specifically, the dynamic controller use forward flows generated by the plant model and gradient information of Riemannian energy functional to force its internal state (a path connecting the reference point to the measured state) to converge to a geodesic. The control output uses the same integration technique of \cite{Manchester:2017} with integrals computed over the dynamic controller's internal state. We will consider state-feedback realization for both nominal and perturbed systems. It is shown that the nominal closed-loop system is globally exponential stable and the path converges to a geodesic if the controller dynamics are sufficiently fast with respect to the plant dynamics. For the robust case where the system is perturbed by bounded additive disturbances, one endpoint of the path would deviate from the measured state, which may lead to closed-loop instability. Robust stability is achieved by adding state feedback to the path dynamics.

The structure of the paper is as follows. Section~\ref{sec:preliminary} gives some preliminaries results on CCM-based control design. In Section~\ref{sec:realization} we detail the proposed continuous-time dynamic realization. A numerical example is presented in Section~\ref{sec:example} to illustrative the effectiveness of this approach.

%%%%%%%%%%%%%%%%%%%%%%%%%%%%%%%%%%%%%%%%%%%%%%%%%%%%%%%%%%%%%%%%%%%%%%%%%%%%
\section{Preliminaries}\label{sec:preliminary}

\subsection{Notation}
We use $ |x| $ to denote the standard Euclidean norm of a real vector $ x $. The nonnegative reals are denoted $ \R^+:=[0,\infty) $. The space $ \Lc_2^e $ is the set of vector signals $ f $ on $ \R^+ $ whose causal truncation to any finite interval $ [0,T] $ has finite squared norm, i.e. $ \sqrt{\int_{0}^{T}|f(t)|^2dt}<\infty $. For symmetric matrices $ A $ and $ B $, the notation $ A\prec B(A\preceq B) $ means that $ B-A $ is positive (semi)definite.

A Riemannian metric on $ \R^n $ is a symmetric positive-definite matrix function $ M(x) $, smooth in $ x $, which defines a smooth inner product $ \inprod{\delta_1}{\delta_2}:=\delta_1^\top M(x)\delta_2 $ for any two tangent vectors $ \delta_1,\,\delta_2 $ at the point $ x $, and the norm $ \|\delta\|_M=\sqrt{\inprod{\delta}{\delta}} $. A metric is called \emph{uniformly bounded} if $ \alpha_1I\preceq M(x) \preceq \alpha_2I,\,\forall x$, for some constants $ \alpha_2\geq \alpha_1>0 $. 

Let $ \Gamma(x,y) $ be the set of smooth paths joining two points $ x $ and $ y $ in $ \R^n $, where each $ c\in \Gamma(x,y) $ is a smooth map $ c:[0,1]\rightarrow\R^n $ and satisfying $ c(0)=x $ and $ c(1)=y $. We use the notation $ c(s),\, s\in [0,1] $ and $ c_s:=\frac{\partial c}{\partial s} $. Given a metric $ M(x) $, we can define the Riemannian length and energy functional of $ c $ as follows
\[
L(c):=\int_0^1\|c_s\|_Mds,\quad E(c):=\int_0^1\|c_s\|_M^2ds
\]
respectively. The Riemannian distance $ d(x,y) $ between two points is the length of the shortest path between them, i.e., $ d(x,y):=\inf_{c\in\Gamma(x,y)}L(c) $. Under the conditions of the Hopf-Rinow theorem, there exists a \emph{geodesic} (minimum-length curve) $ \gamma\in\Gamma(x,y) $ such that $ d(x,y)=L(\gamma) $. Furthermore, we have $ E(\gamma)=L(\gamma)^2=\inprod{\gamma_s}{\gamma_s},\,\forall s\in [0,1] $. 
%\[
%E(\gamma)=L(\gamma)^2\leq L(c)^2\leq E(c),\;\forall c\in \Gamma(x,y).
%\] 

Let $ \Gamma(x,y,t) $ be the set of smooth time-varying paths $ c:\R\times [0,1]\rightarrow\R^n $ connecting smooth signals $ x(t) $ and $ y(t) $. We also use $ c(t):=c(t,\cdot) $ and $ \dot{c}:=\frac{d c}{d t} $. The formula for first variation of energy \cite[p. 195]{Do-Carmo:1992} gives the time derivative of the energy functional $ E(t):=E(c(t)) $ as follows
\begin{equation}\label{eq:first-variation}
\frac{1}{2}\frac{dE}{dt}=\inprod{\dot{c}}{c_s}\bigr|_{s=0}^{s=1}-\int_{0}^{1}\inprod{\dot{c}}{\nabla_{c_s}c_s}ds
\end{equation}
where $ \nabla $ is the Riemannian connection induced by the metric $ M(x) $, and $ \nabla_{c_s}c_s $ is the covariant derivative. A smooth curve $ c $ is a geodesic if and only if $ \nabla_{c_s}c_s= 0 $. 

\subsection{Control Contraction Metrics}

Consider nonlinear control-affine systems of the form
\begin{equation}\label{eq:system}
\dot{x}=F(x,u):=\mathfrak{f}(x)+B(x)u
\end{equation}
where $ x(t)\in\R^n $ and $ u(t)\in \R^m $ are state and control at time $ t\in\R^+:=[0,\infty)$, respectively. For simplicity, $ \mathfrak{f} $ and $ B $ are assumed to be smooth and time-invariant. We denote the $i$th column of $ B(x) $ by $ b_i(x) $. For the system \eqref{eq:system} we define a \emph{reference trajectory} to be any set of signals $ x^*,u^* $ all in $ \Lc_2^e $ and satisfying \eqref{eq:system} on $ \R^+ $. A reference trajectory $ (x^*,u^*) $ is said to be globally exponentially stabilized by a feedback controller $ u=\kappa(x,x^*,u^*) $ if for any initial state $ x(0)\in\R^n $, a unique closed-loop solution $ x(t) $ exists for all $ t\in\R^+ $ and satisfies
\begin{equation}\label{eq:universal-statbility}
	|x(t)-x^*(t)|\leq Re^{-\lambda t}|x(0)-x^*(0)|
\end{equation}
where $ R>0 $ is the overshoot, and $ \lambda>0 $ the rate. System \eqref{eq:system} is said to be \emph{universally exponentially stabilizable} if every forward-complete solution $ (x^*,u^*) $ is globally exponentially stabilizable. Note that universal stabilizablity is a strong condition than global stabilizablity of a particular solution.

Nonlinear stabilization using control contraction metric (CCM) (\cite{Manchester:2017}) is a constructive approach to achieve universal stability. For the offline design stage, it applies linear system theory to the control synthesis of the local linearized system -- \emph{differential dynamics}:
\begin{equation}\label{eq:diff-dyn}
\dot{\delta}_x=A(x,u)\delta_x+B(x)\delta_u
\end{equation} 
where $ A=\frac{\partial \mathfrak{f}}{\partial x}+\sum_{i=1}^m \frac{\partial b_i}{\partial x}u_i$. Specifically, we construct a differential feedback law:
\begin{equation}\label{eq:diff-control}
\delta_u=K(x)\delta_x
\end{equation} 
where $ K=YW^{-1} $ with $ W(x)\in\R^{n\times n} $ and $ Y(x)\in\R^{m\times n} $ obtained from the following parameter-dependent linear matrix inequality (LMI):
\begin{equation}\label{eq:ccm-synthesis}
-\dot{W}+AW+WA^\top-BY-Y^\top B^\top+2\lambda W\preceq 0.
\end{equation} 
From the above inequality, the controller \eqref{eq:diff-control} achieves exponential stability for \eqref{eq:diff-dyn}:
\begin{equation}\label{eq:diff-stability}
\frac{d}{dt}\|\delta_x\|_M^2=\delta_x^\top\dot{M}\delta_x+2\delta_x^\top M(A+BK)\delta_x \leq -2\lambda \|\delta_x\|_M^2
\end{equation}
where $ M(x)=W^{-1}(x) $ is called a CCM. 

A static (memoryless) realization of the controller was proposed in \cite{Manchester:2017}, which includes three steps:
\begin{enumerate}
	\item Compute a minimal geodesic 
	\begin{equation}\label{eq:geodesic-computation}
	\gamma(t):=\argmin_{c\in\Gamma(x^*(t),x(t))}E(c).
	\end{equation}
	\item Integrate \eqref{eq:diff-control} over $ \gamma $, i.e.,
	\begin{equation}\label{eq:realization-ideal}
	\kappa_\gamma(t,s):=u^*(t)+\int_0^sK(\gamma(t,\mathfrak{s}))\gamma_s(t,\mathfrak{s}) d\mathfrak{s}.
	\end{equation}
	\item Implement the state-feedback control 
	\begin{equation}\label{eq:static-geo}
		u=\kappa(x,x^*,u^*):=\kappa_\gamma(t,1).
	\end{equation}
\end{enumerate}
This static realization achieves universally exponential stability with overshoot $ R=\sqrt{\frac{\alpha_2}{\alpha_1}} $ and rate $ \lambda $. If there exists a smooth coordinate transformation $ z=h(x) $ satisfying $ \delta_z^\top\delta_z=\delta_x^\top M(x)\delta_x $, we can compute the geodesics directly via $ \gamma(s)=h^{-1}(z^*(1-s)+zs) $ where $ z^*=h(x^*) $ and $ z=h(x) $. However, for general cases, the computation of geodesics involves an optimization problem \eqref{eq:geodesic-computation}, which is not desired for time-critical applications. 
 
%%%%%%%%%%%%%%%%%%%%%%%%%%%%%%%%%%%%%%%%%%%%%%%%%%%%%%%%%%%%%%%%%%%%%%%%%%%%

%%%%%%%%%%%%%%%%%%%%%%%%%%%%%%%%%%%%%%%%%%%%%%%%%%%%%%%%%%%%%%%%%%%%%%%%%%%%
\section{Continuous-time Dynamic Realization}\label{sec:realization}

In this section, we introduce a continuous dynamic control realization which keeps Step 2) and 3) unchanged but replace Step 1) with a dynamical system whose internal state is a path joining $ x^*(t) $ to $ x(t) $. This path dynamics solves a geodesic computation problem in parallel with the plant system. As shown in Fig.~\ref{fig:dynamic-control}, we will consider two scenarios: nominal and robust state feedback. In particular, for the robust case, we assume that the system \eqref{eq:system} is perturbed by bounded additive disturbances, i.e.,
\begin{equation}\label{eq:system-perturbed}
\dot{x}=F(x,u)+d
\end{equation}
with $ \|d(t)\|\leq \Delta $ for all $ t\in\R^+ $.

\begin{figure}
	\centering
	\begin{tabular}{cc}
		\includegraphics[width=0.46\linewidth]{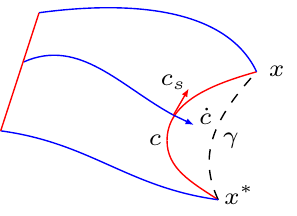} &
		\includegraphics[width=0.46\linewidth]{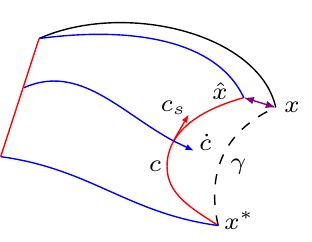} \\
		{\scriptsize (a) Nominal case } & {\scriptsize (b) Robust case } 
	\end{tabular}
	\caption{Geometric illustrations of the continuous-time dynamic realization: red -- path $ c(t,\cdot) $, blue -- flows $ c(\cdot,s) $, dash -- geodesic $ \gamma(t) $.}
	\label{fig:dynamic-control}
\end{figure}

\subsection{Nominal State Feedback}
First, we consider the continuous-time dynamic realization via the forward flow defined by \eqref{eq:system}:
\begin{equation}\label{eq:pd-forward}
\begin{split}
\dot{c}&=f(t,s):=F(c(t,s),\kappa_c(t,s)) \\
u&=\kappa_c(t,1) 
\end{split}
\end{equation}
where the initial state is $ c(0,s)=sx(0)+(1-s)x^*(0) $. Then the endpoint dynamics can be represented by 
\begin{subequations}
	\begin{align}
	\dot{c}(t,0)&=F(x^*,u^*), \label{eq:reference}\\
	\dot{c}(t,1)&=f(x,u).
	\end{align}
\end{subequations}
It is easy to verify that $ c(t,0)=x^*(t) $ and $ c(t,1)=x(t) $ for all $ t\in\R^{+} $ since $ c(0,0)=x^*(0) $ and $ c(0,1)=x(0) $. Moreover, integration of \eqref{eq:diff-stability} over $ c(t) $ gives
\begin{equation}\label{eq:path-stability}
\frac{1}{2}\dot{E}=\inprod{f(t,s)}{c_s}\bigr|_{s=0}^{s=1}-\int_{0}^{1}\inprod{f(t,s)}{\nabla_{c_s}c_s}ds\leq -\lambda E.
\end{equation}
Globally exponential stability is achieved but perhaps with larger overshoot since $ c(t) $ generally does not converge to a geodesic $ \gamma(t) $. 

Now we consider an alternative path dynamics:
\begin{equation}\label{eq:pd-nominal}
\begin{split}
\dot{c}=f(t,s)+\alpha(s)\nabla_{c_s}c_s
\end{split}
\end{equation}
where $  \alpha:[0,1]\rightarrow\R^+ $ is a smooth weighting function satisfying $ \alpha(0)=\alpha(1)=0 $. Here the covariant derivative $ \nabla_{c_s}c_s $ can be taken as the gradient information of the geodesic optimization problem \eqref{eq:geodesic-computation}. We define the normalized weighting function as $ \eta(s)=\alpha(s)/\overline{\alpha} $ with $ \overline{\alpha}=\max_{s\in[0,1]}\alpha(s) $.  Then, the nominal stability is given as follows.
\begin{thm}\label{thm:nominal}
	For any weighting function $ \alpha(s) $, the system \eqref{eq:system} subject to the control law \eqref{eq:pd-nominal} is universally exponentially stable. If the parameter $ \overline{\alpha} $ is chosen to be sufficiently large, then the controller internal state $ c(t,\cdot) $ converges to a geodesic $ \gamma(t,\cdot)\in\Gamma(x^*,x,t) $ before $ x(t) $ converges to $ x^*(t) $.
\end{thm}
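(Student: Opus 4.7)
The plan is to adapt the energy-decay argument behind \eqref{eq:path-stability} to the modified path dynamics \eqref{eq:pd-nominal}, and then invoke a two-time-scale separation for the second assertion. A key structural observation is that $\alpha(0)=\alpha(1)=0$ annihilates the covariant-derivative correction at the endpoints, so $\dot{c}(t,0)=F(x^*,u^*)$ and $\dot{c}(t,1)=F(x,u)$ still hold; consequently, if the initial path satisfies $c(0,0)=x^*(0)$ and $c(0,1)=x(0)$ (e.g.\ the straight line), then $c(t,0)=x^*(t)$ and $c(t,1)=x(t)$ for all $t\geq 0$, so $c(t,\cdot)\in\Gamma(x^*,x,t)$.

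With endpoints pinned, I substitute $\dot{c}=f(t,s)+\alpha(s)\nabla_{c_s}c_s$ into the first variation of energy \eqref{eq:first-variation}. The boundary term reduces to $\inprod{f}{c_s}\bigr|_{s=0}^{s=1}$ since the $\alpha$-correction vanishes at $s=0,1$, and the volume term picks up an extra $\int_0^1\alpha(s)\|\nabla_{c_s}c_s\|_M^2\,ds$. The first two contributions are exactly those appearing on the right-hand side of \eqref{eq:path-stability}, which is a purely geometric inequality bounded above by $-\lambda E$ (it depends only on the form of $f$ and the CCM property, not on how $c$ is actually propagated in time). Hence
\begin{equation*}
\tfrac{1}{2}\dot{E}\leq -\lambda E-\int_0^1\alpha(s)\|\nabla_{c_s}c_s\|_M^2\, ds\leq -\lambda E,
\end{equation*}
so $E(t)\leq E(0)e^{-2\lambda t}$. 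Combining this with $|x-x^*|^2\leq \alpha_1^{-1}d(x^*,x)^2\leq \alpha_1^{-1}E$ (from $M\succeq\alpha_1 I$ and Cauchy--Schwarz) and $E(0)\leq \alpha_2|x(0)-x^*(0)|^2$ for the straight-line initial path yields \eqref{eq:universal-statbility} with rate $\lambda$ and overshoot $R=\sqrt{\alpha_2/\alpha_1}$, establishing universal exponential stability for every admissible $\alpha$.

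For the second assertion, integrating the stronger bound in time gives
\begin{equation*}
\overline{\alpha}\int_0^\infty\!\!\int_0^1\eta(s)\|\nabla_{c_s}c_s\|_M^2\, ds\, dt\leq\tfrac{1}{2}E(0),
\end{equation*}
so the weighted mean-square geodesic defect vanishes as $\overline{\alpha}\to\infty$. The proper interpretation of \eqref{eq:pd-nominal} is as a singularly perturbed system: the drift $f(t,s)$ advances the endpoints on the slow time scale $1/\lambda$, while the fast term $\overline{\alpha}\eta(s)\nabla_{c_s}c_s$ is a gradient descent on $E$ over variations that vanish at the endpoints, whose equilibria are precisely the geodesics in $\Gamma(x^*,x,t)$. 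Choosing $\overline{\alpha}\gg\lambda$ separates the scales so that $c(t,\cdot)$ relaxes onto the geodesic slow manifold well before $|x-x^*|$ becomes small.

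The main obstacle is to upgrade this time- and $s$-averaged bound on $\nabla_{c_s}c_s$ to a uniform-in-$s$, pointwise-in-$t$ convergence of $c(t,\cdot)$ to a specific geodesic $\gamma(t,\cdot)$. The natural route is a Tikhonov-style singular-perturbation analysis applied to the frozen-endpoints fast subsystem $\partial_\tau c=\eta(s)\nabla_{c_s}c_s$: one verifies that this fast flow is exponentially contracting toward a locally unique minimizing geodesic by using strict positivity of the second variation of $E$ at a minimizer, smoothness of $M$, and uniform boundedness of the closed-loop trajectory, then lifts this fast convergence to the full coupled system for $\overline{\alpha}$ sufficiently large.
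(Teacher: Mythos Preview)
Your universal-stability argument is exactly the paper's: pin the endpoints via $\alpha(0)=\alpha(1)=0$, substitute into the first variation \eqref{eq:first-variation}, observe that the extra covariant-derivative term is nonnegative, and conclude $\tfrac12\dot E\leq -\lambda E$.

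For the second assertion the paper's route differs from yours. Its key device is Lemma~\ref{lem:1}: for any curve $c$ and any $\tau\in(0,1)$ one can choose a weighting function $\eta$ so that $\int_0^1\eta(s)\|\nabla_{c_s}c_s\|_M^2\,ds\geq\tau\int_0^1\|\nabla_{c_s}c_s\|_M^2\,ds$. The paper invokes this \emph{online}, adjusting $\eta$ at each $t$ to the current path $c(t,\cdot)$, which converts the weighted integral into the unweighted one and yields $\tfrac12\dot E\leq -\lambda E-\overline{\alpha}\tau\int_0^1\|\nabla_{c_s}c_s\|_M^2\,ds$; from there it makes the same informal two-time-scale claim you do. This lemma addresses precisely the obstacle you flag: with a fixed $\eta$ vanishing at $s=0,1$, your integrated bound $\overline{\alpha}\int_0^\infty\!\int_0^1\eta\|\nabla_{c_s}c_s\|_M^2\,ds\,dt\leq\tfrac12E(0)$ leaves the geodesic defect near the endpoints uncontrolled. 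Your alternative---a Tikhonov analysis of the frozen-endpoint fast flow $\partial_\tau c=\eta(s)\nabla_{c_s}c_s$ via positivity of the second variation---is a more standard singular-perturbation route, but since $\eta$ degenerates at the endpoints the exponential contraction of that fast flow toward a geodesic is not automatic and would itself require an endpoint argument of the sort the lemma supplies. Neither proof is fully rigorous on this point; the paper's is terser but has the explicit lemma doing the work on the $s$-uniformity issue, while yours is more candid about what remains to be verified.
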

\begin{proof}	
From \eqref{eq:path-stability}, we have
\begin{equation}\label{eq:universal-stability}
\frac{1}{2}\dot{E}\leq -\lambda E-\overline{\alpha}\int_{0}^{1}\eta(s)\|\nabla_{c_s}c_s\|_M^2ds\leq -\lambda E.
\end{equation}
Universal stability of the nominal system \eqref{eq:system} follows as the length of $ c(t) $ shrinks exponentially. 

Choose a constant $ \tau\in (0,1) $, from Lemma~\ref{lem:1} we can online adjust the weighting function $ \eta $ such that the following inequality holds:
\begin{equation}
\frac{1}{2}\dot{E}\leq -\lambda E-\overline{\alpha}\tau\int_{0}^{1}\|\nabla_{c_s}c_s\|_M^2ds.
\end{equation} 
With a sufficiently large $ \overline{\alpha} $, the closed-loop system can be decomposed into two time-scale subsystems: slow dynamics \eqref{eq:system} and fast dynamics \eqref{eq:pd-nominal}. The covariant derivative $ \nabla_{c_s}c_s $ will be forced to converge to 0 (i.e., the path $ c(t,\cdot) $ converges to a geodesic $ \gamma(t,\cdot)\in\Gamma(x^*,x,t) $) before the convergence of the state $ x(t) $ to $ x^*(t) $. 
\end{proof}
\begin{rem}
	As shown in Section~\ref{sec:implementation}, the online implementation only computes a finite number of flows $ c(t,s_j),\,j=0,1,\ldots,N $ digitally using forward-Euler or Runge-Kutta methods with a sufficiently small sampling time $ \tau_s $. Thus, $ \overline{\alpha} $ cannot be chosen to be arbitrary large due to numerical stability consideration and the parameters $ (s_0,s_1) $ for the weighting function $ \eta(s) $ in \eqref{eq:weighting-fun} cannot be chosen to be arbitrary close to $ (0,1) $. Although the path $ c(t,\cdot) $ may not follow $ \gamma(t,\cdot) $ exactly, it can still converge to a small neighborhood of $ \gamma(t,\cdot) $.
\end{rem}
%%%%%%%%%%%%%%%%%%%%%%%%%%%%%%%%%%%%%%%%%%%%%%%%%%%%%%%%%%%%%%%%%%%%%%%%%%%%
\subsection{Robust State Feedback}
When system \eqref{eq:system} is perturbed by external disturbances, the state trajectory $ x(\cdot) $ generally does not coincide with the endpoint trajectory $ c(\cdot,1) $ generated by the path dynamics \eqref{eq:pd-nominal}. Let $ \hat{x}(t)=c(t,1)$ and $ \tilde{x}(t)=x(t)-\hat{x}(t) $. To reduce the disturbance effect on $ \tilde{x} $, we use the following path dynamics
\begin{equation}\label{eq:realization-robust}
\begin{split}
	\dot{c}=f(t,s)+\alpha(s)\nabla_{c_s}c_s+\beta(s)\tilde{x}(t)
\end{split}
\end{equation}
where $ \beta(s)=\overline{\beta}\zeta(s) $ with $ \zeta:[0,1]\rightarrow [0,1] $ as a nondecreasing function satisfying $ \zeta(0)=0 $ and $ \zeta(1)=1 $. Note that, for the nominal case, the above system is equivalent to the path dynamics \eqref{eq:pd-nominal}.

If the disturbance bound $ \Delta $ is sufficiently small, the dynamics of $ \tilde{x}(\cdot) $ can be approximated by
\begin{equation}\label{eq:diff-dyn-approximation}
	\dot{\tilde{x}}=(A_{cl}(\hat{x},\hat{u})-\overline{\beta}I)\tilde{x}+d
\end{equation}
where $ A_{cl}(\hat{x},\hat{u})=A(\hat{x},\hat{u})+B(\hat{x})K(\hat{x}) $ with $ \hat{u}=\kappa_c(t,1) $. From \eqref{eq:diff-stability} we have that the maximum eigenvalue of $ A_{cl}(\hat{x},\hat{u}) $ is no larger than $ -\lambda $. Therefore, the error bound for $ \tilde{x} $ is 
\begin{equation}\label{eq:dx-bound}
	|\tilde{x}(t)|\leq \frac{\Delta}{\overline{\beta}+\lambda},\quad\forall t\in\R^+.
\end{equation}
The time derivative of the energy functional satisfies
\begin{equation}\label{eq:robust-proof-1}
	\begin{split}
	\frac{1}{2}\dot{E}\leq & -\lambda E+\inprod{\overline{\beta}\tilde{x}}{c_s(t,1)}-\int_{0}^{1}\inprod{\overline{\beta}\tilde{x}}{\zeta\nabla_{c_s}c_s}ds \\
	&-\overline{\alpha}\tau\int_{0}^{1}\|\nabla_{c_s}c_s\|_Mds  \\
	\leq &  -\lambda E +\inprod{\overline{\beta}\tilde{x}}{c_s(t,1)}+\frac{\epsilon^2}{4}\|\overline{\beta}\tilde{x}\|_M^2 \\
	& -\frac{\overline{\alpha}\tau}{2}\int_{0}^{1}\|\nabla_{c_s}c_s\|_Mds
	\end{split}
\end{equation} 
where $ \epsilon\geq \sqrt{2/\overline{\alpha}\tau} $. The closed-loop robust stability is given as follows.
\begin{thm}
	Consider the perturbed system \eqref{eq:system-perturbed} and the continuous-time dynamic control realization \eqref{eq:realization-robust}. If the parameter $ \overline{\alpha},\overline{\beta} $ are sufficiently large, the closed-loop system is robust stable with respect to the set 
	\begin{equation}\label{eq:pd-robust}
		\Omega(x^*)=\{x\in\R^n:|x-x^*|\leq \overline{R}\Delta/\lambda\}
	\end{equation}
	where $ \overline{R}=\frac{(1+\sqrt{1+\lambda\epsilon^2})\overline{\beta}}{2(\overline{\beta}+\lambda)}R+\frac{\lambda}{\overline{\beta}+\lambda} $.
\end{thm}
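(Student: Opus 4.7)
The plan is to combine \eqref{eq:robust-proof-1} with the error bound \eqref{eq:dx-bound} to extract a steady-state bound on the Riemannian energy $E$, then recover $|x-x^*|\le|\tilde{x}|+|\hat{x}-x^*|$ via the uniform bounds on the metric.

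First I would tame the cross term in \eqref{eq:robust-proof-1} by Young's inequality with a free parameter $\mu\in(0,\lambda)$:
\begin{equation*}
\inprod{\overline{\beta}\tilde{x}}{c_s(t,1)}\le \mu\|c_s(t,1)\|_M^2+\tfrac{1}{4\mu}\|\overline{\beta}\tilde{x}\|_M^2 .
\end{equation*}
The critical step is to replace the pointwise quantity $\|c_s(t,1)\|_M^2$ with $E$: on a geodesic $\|c_s\|_M$ is constant in $s$, so $\|c_s(t,1)\|_M^2=E$, and for sufficiently large $\overline{\alpha}$ the singular-perturbation argument used in the proof of Theorem~\ref{thm:nominal} keeps $c(t,\cdot)$ in a small neighbourhood of a geodesic. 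Dropping the non-positive covariant-derivative term then reduces \eqref{eq:robust-proof-1} to
\begin{equation*}
\tfrac{1}{2}\dot{E}\le -(\lambda-\mu)E+\bigl(\tfrac{1}{4\mu}+\tfrac{\epsilon^2}{4}\bigr)\|\overline{\beta}\tilde{x}\|_M^2 .
\end{equation*}

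Next, using $M\preceq\alpha_2 I$ together with \eqref{eq:dx-bound}, the forcing is bounded by $\alpha_2\overline{\beta}^2\Delta^2/(\overline{\beta}+\lambda)^2$, and a standard comparison argument gives the ultimate bound $E_\infty\le [1/(\lambda-\mu)]\,(1/(4\mu)+\epsilon^2/4)\,\alpha_2\overline{\beta}^2\Delta^2/(\overline{\beta}+\lambda)^2$. Because $M\succeq\alpha_1 I$ together with Cauchy--Schwarz yields $|\hat{x}-x^*|\le L(c)/\sqrt{\alpha_1}\le\sqrt{E/\alpha_1}$, the triangle inequality delivers
\begin{equation*}
|x-x^*|\le \frac{\Delta}{\overline{\beta}+\lambda}+\frac{R\overline{\beta}\Delta}{\overline{\beta}+\lambda}\sqrt{\frac{1/(4\mu)+\epsilon^2/4}{\lambda-\mu}}
\end{equation*}
with $R=\sqrt{\alpha_2/\alpha_1}$. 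I would then minimise the right-hand side over $\mu\in(0,\lambda)$: setting the derivative of $(1+\epsilon^2\mu)/[\mu(\lambda-\mu)]$ to zero gives $\epsilon^2\mu^2+2\mu-\lambda=0$ with positive root $\mu^\star=(\sqrt{1+\lambda\epsilon^2}-1)/\epsilon^2$. Using the identities $1+\epsilon^2\mu^\star=\sqrt{1+\lambda\epsilon^2}$ and $\lambda-\mu^\star=\mu^\star\sqrt{1+\lambda\epsilon^2}$, the radical collapses to $1/(2\mu^\star)=(1+\sqrt{1+\lambda\epsilon^2})/(2\lambda)$, and the claimed constant $\overline{R}$ and bound $|x-x^*|\le\overline{R}\Delta/\lambda$ drop out.

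The main obstacle is justifying the substitution $\|c_s(t,1)\|_M^2\approx E$. The extra forcing $\beta(s)\tilde{x}$ in \eqref{eq:realization-robust} genuinely pushes $c(t,\cdot)$ off the geodesic manifold, so a rigorous treatment needs a two-time-scale analysis with $\overline{\alpha}$ large relative to both the plant time scale and $\overline{\beta}$, so that the residual $\|c_s(t,1)\|_M^2-E$ can be absorbed into the $\epsilon^2$-term. The constraint $\epsilon\ge\sqrt{2/(\overline{\alpha}\tau)}$ keeps this consistent: as $\overline{\alpha}\to\infty$ both the residual and the admissible $\epsilon$ shrink together, so the resulting bound tightens and, in the limit $\overline{\beta}\to\infty$, recovers the nominal overshoot $R$.
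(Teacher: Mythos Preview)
Your proposal is correct and follows essentially the same route as the paper: both invoke the large-$\overline{\alpha}$ singular-perturbation argument to replace $\|c_s(t,1)\|_M^2$ by $E$ via the geodesic identity $E(\gamma)=\langle\gamma_s,\gamma_s\rangle$, then combine the resulting energy inequality with \eqref{eq:dx-bound} and the triangle inequality $|x-x^*|\le|\hat{x}-x^*|+|\tilde{x}|$. The paper's proof omits the derivation of the constant $\overline{R}$ entirely, so your explicit Young's-inequality-plus-optimization step (equivalent to applying Cauchy--Schwarz and solving the steady-state quadratic $\lambda E_\infty-\|\overline{\beta}\tilde{x}\|_M\sqrt{E_\infty}-\tfrac{\epsilon^2}{4}\|\overline{\beta}\tilde{x}\|_M^2=0$) actually fills in a computation the paper leaves to the reader.
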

\begin{proof}
	If $ \overline{\alpha} $ is sufficiently large, we can conclude from \eqref{eq:robust-proof-1} that $ c(t) $ converges to $ \gamma(t) $ for $ t\geq T $ where $ T $ is sufficiently large. This leads to
	\begin{equation}
		\frac{1}{2}\dot{E}\leq -\lambda E+\inprod{\overline{\beta} \tilde{x}}{\gamma_s(t,1)}+\frac{\epsilon}{4}\|\overline{\beta}\tilde{x}\|_M^2.
	\end{equation}
	With the facts that $ E(\gamma)=\inprod{\gamma_s}{\gamma_s},\forall s\in[0,1] $ and $ |\tilde{x}|\leq \frac{\Delta}{\overline{\beta}+\lambda} $, there exists a $ T'>T $ such that the following inequality holds for $ t\geq T' $:
	\begin{equation}
		\begin{split}
		|\hat{x}(t)-x^*(t)|\leq \frac{(1+\sqrt{1+\lambda\epsilon^2})\overline{\beta}}{2(\overline{\beta}+\lambda)}R\Delta
		\end{split}
	\end{equation}
	which leads to
	\begin{equation}
		|x(t)-x^*(t)|\leq |\hat{x}(t)-x^*(t)|+|\tilde{x}|\leq \overline{R}\Delta/\lambda.
	\end{equation}
\end{proof}
\begin{rem}
	The parameter $ \overline{\alpha} $ represents the convergence speed of $ c(t) $ to a geodesic $ \gamma(t) $ connecting $ x^*(t) $ to $ \hat{x}(t) $ while the parameter $ \overline{\beta} $ controls the convergence speed of $ \hat{x}(t) $ to $ x(t) $. The parameter $ \epsilon $ affects the size of invariant set. When $ \overline{\alpha},\overline{\beta}\rightarrow \infty $ and $ \epsilon\rightarrow 0 $, we have $ \overline{R}\rightarrow R $ which implies that dynamic realization achieves the same invariant set as the geodesic based static realization \eqref{eq:static-geo}.
\end{rem}
 
%%%%%%%%%%%%%%%%%%%%%%%%%%%%%%%%%%%%%%%%%%%%%%%%%%%%%%%%%%%%%%%%%%%%%%%%%%%% 
\subsection{Implementation}\label{sec:implementation}

\begin{figure}[!bt]
	\centering
	\includegraphics[width=1\linewidth]{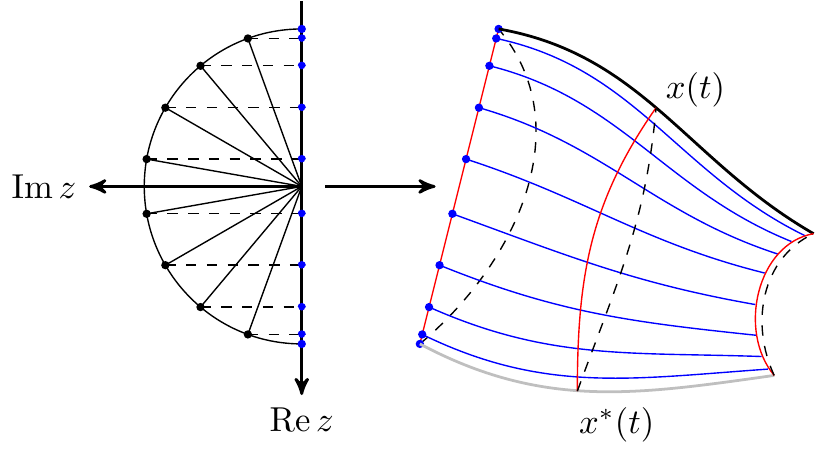}
	\caption{Illustration of discretization of path $ c(t) $ using Chebyshev polynomials.}\label{fig:implementation}
\end{figure}

The path dynamics is an infinite-dimensional system as its internal state $ c(t) $ is a smooth function over $ [0,1] $. For online implementation, we approximate $ c(t) $ with Chebyshev polynomial expansion at time $ t $. This is a finite-dimensional approximation based on the samples of the path at Chebyshev nodes. In this way, the path dynamics is discretized into a finite set of dynamical systems whose state dimension is same as the original nonlinear plant. Those systems are solved in parallel with the nonlinear plant and the solutions are used to construct an approximate path at the next time step. As time involves, this path converges to a geodesic due to the forward and gradient descent flows. This approach is different from \cite{Leung:2017} which uses Chebyshev polynomials to discretize the geodesic computation problem \eqref{eq:geodesic-computation} at each time point. A finite-dimensional NLP is iteratively solved online, and the optimal solution is then used to construct a geodesic. Thus, the online computation time of the proposed approach is expected to be much smaller, compared with the optimization based approach \cite{Leung:2017}.

Firstly, we recall some standard results of approximation theory using Chebyshev polynomials (see \cite{Trefethen:2013} for details). The first-kind Chebyshev polynomials $ T_k(x) $ over the interval $ [-1,1] $ are defined recursively by
\begin{equation}
	T_{k+1}(x)=2xT_{k}(x)-T_{k-1}(x),\quad k=1,2,3,\ldots
\end{equation}
with starting values $ T_0(x)=1$ and $ T_1(x)=x $. Under the coordinate transform $ x=\cos(\theta) ,\, \theta\in[0,\pi] $, we have $ T_k(\cos\theta)=\cos(k\theta) $. A continuous function $ f(x) $ over the interval $ [-1,1] $ can be approximated by
\begin{equation}\label{eq:chebyshev}
\tilde{f}(\cos\theta)=\frac{a_0}{2}+\sum_{k=1}^{N}a_k\cos(k\theta)
\end{equation}
where the coefficients $\{ a_k\}_{0\leq k\leq N} $ can be obtained by apply discrete cosine transform to the samples of $ f $ at the Chebyshev nodes $ x_j=\cos(j\pi/N),j=0,1,\ldots,N $. 
Other operations on $ f $ such as integration and differentiation can also be efficiently approximated by Chebyshev polynomials.  

Since the CCM-based control design is invariant under coordinate transformations \cite{Manchester:2017}, we can reparameterize the path $ c(t,\cdot) $ from $ [0,1] $ to $ [-1,1] $. For the online implementation of dynamic controllers \eqref{eq:pd-forward}, \eqref{eq:pd-nominal} and \eqref{eq:pd-robust}, instead of computing the infinite-dimensional state $ c(t,\cdot) $, we only compute the flows at $ s_j=\cos(j\pi/N),j=0,1,\ldots,N $, as shown in Fig.~\ref{fig:implementation}. Base on the values of $ c(t,s_j) $, the state $ c(t,\cdot) $ is reconstructed as $ c(t,s)=\mathbf{c}(t)T(s) $ where $ \mathbf{c}(t)\in\R^{n\times (N+1)} $ and $ T(s)=[T_0(s),T_1(s),\ldots,T_N(s)]^\top $. With this, we can compute smooth representations of the derivative $ \frac{\partial c}{\partial s} $, covariant derivative $ \nabla_{c_s}c_s $, differential control $ \delta_u $ and its integration $ \kappa_c $. By taking samples of these functions at the Chebyshev nodes, we can obtain the right hand side of the dynamic controllers \eqref{eq:pd-forward}, \eqref{eq:pd-nominal} and \eqref{eq:pd-robust}. 

\begin{rem}
	Note that the above computation only involves a series of simple online operations such as additions, multiplications, differentiation and integration over a smooth function $ c(t,\cdot) $. Due to the absence of complex operations (e.g. solutions of optimization problems) the online computational time can be estimated \emph{a priori}. This information can be used to choose a sufficiently small sampling time $ \tau_s $ such that the flows $ c(\cdot,s_j),j=0,1,\ldots,N $ can be computed digitally by using forward-Euler or Runge-Kutta approximation methods.
\end{rem}

%%%%%%%%%%%%%%%%%%%%%%%%%%%%%%%%%%%%%%%%%%%%%%%%%%%%%%%%%%%%%%%%%%%%%%%%%%%% 
\section{Illustrative Example}\label{sec:example}

\begin{figure}[!bt]
	\centering
	\includegraphics[width=0.8\linewidth]{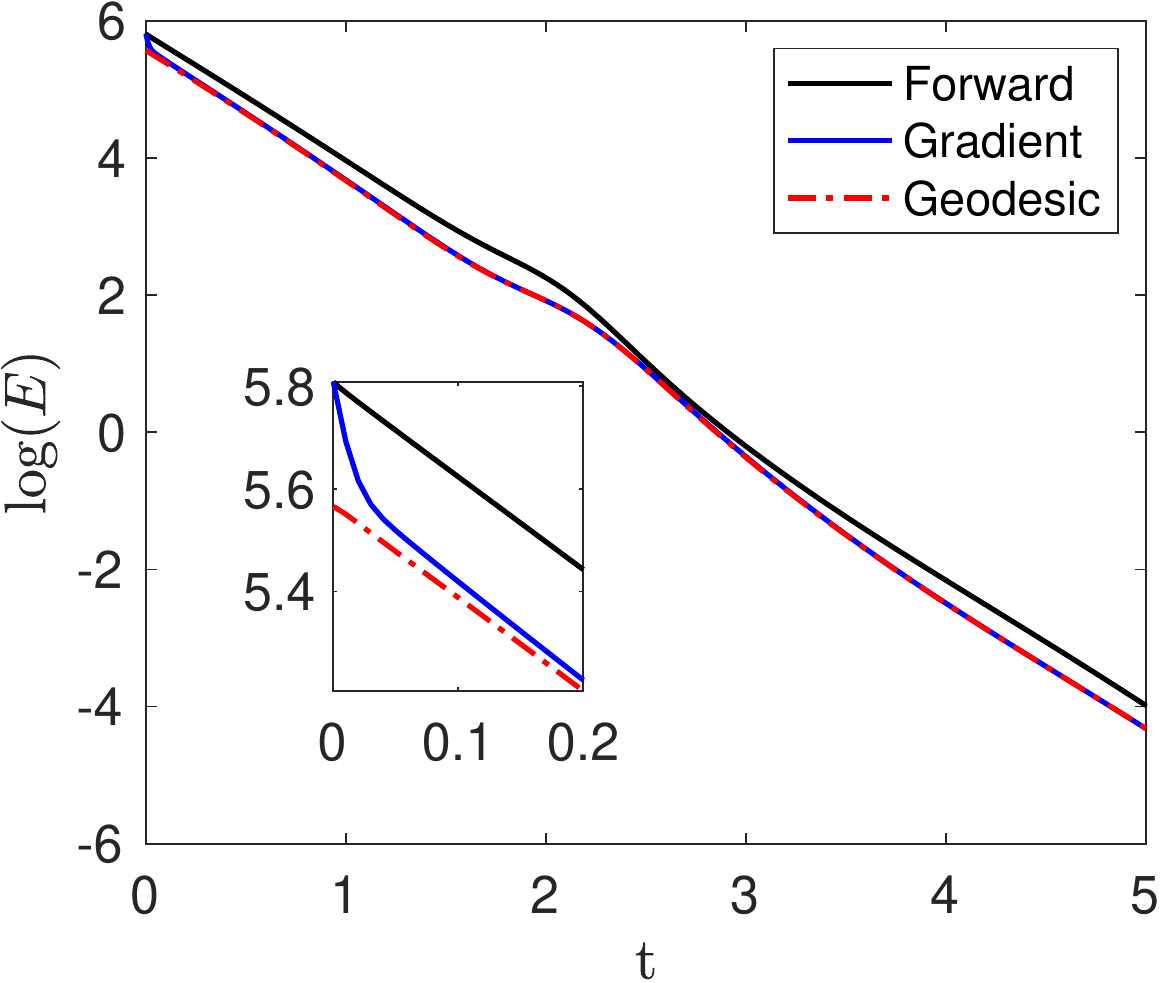}
	\caption{Exponential decay rate of the Riemannian energy of integral paths.}\label{fig:re-nm}
\end{figure}

\begin{figure*}[!tb]
	\centering
	\includegraphics[width=0.9\linewidth]{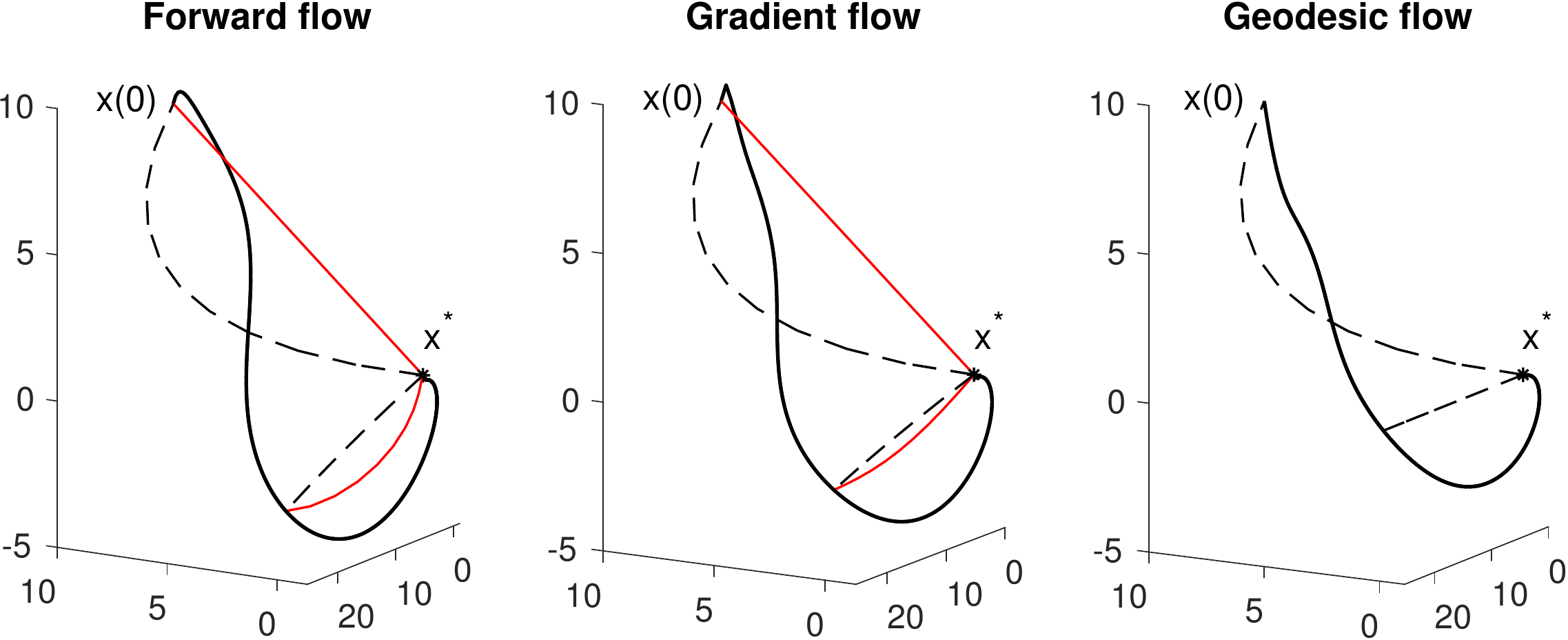}
	\caption{Nominal state-feedback control: red -- path $ c(t,\cdot) $ at two time points, black dash -- geodesic $ \gamma(t,\cdot) $, black solid -- state trajectory $ x(t) $. }\label{fig:cmp-nm}
\end{figure*}

\begin{figure*}[!tb]
	\centering
	\includegraphics[width=0.9\linewidth]{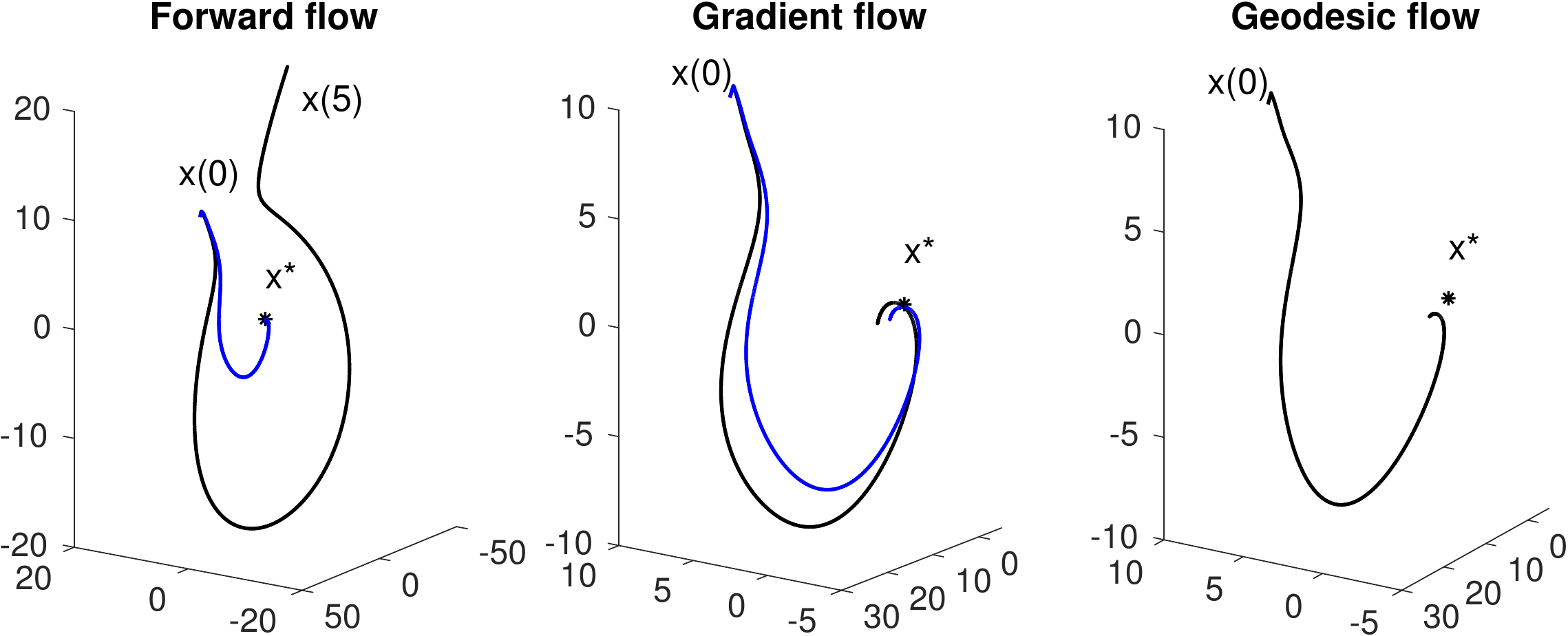}
	\caption{Robust state-feedback control: blue -- predicted state trajectory $ c(t,1) $, black solid -- state trajectory $ x(t) $.}\label{fig:cmp-rb}
\end{figure*}

We consider the following nonlinear system
\begin{equation}
	\begin{bmatrix}
		\dot{x}_1 \\ \dot{x}_2 \\ \dot{x}_3
	\end{bmatrix}=
	\begin{bmatrix}
	-x_1+x_3 \\
	x_1^2-x_2-2x_1x_2+x_3 \\
	-x_2
	\end{bmatrix}+
	\begin{bmatrix}
	0 \\ 0 \\ 1
	\end{bmatrix}u.
\end{equation}
This system is not feedback linearizable and highly unstable. The control synthesis problem \eqref{eq:ccm-synthesis} was solved by SOS programming with LMI toolbox - Yalmip \cite{Lofberg:2004}. A control contraction metric with $ \lambda=1 $ was found to be
\[
W(x)=W_0+W_1x_1+W_2x_1^2
\]
where 
\begin{gather*}
W_0=
\begin{bmatrix}
2.686  &  0.237 &  -1.816 \\
0.237 &  16.265 &   2.006 \\
-1.816  &  2.006 &   6.395
\end{bmatrix} \\
W_1=
\begin{bmatrix}
     0  & -5.373 & 0 \\
-5.373 &   -0.948 &    3.631 \\
0  &  3.631 &   0
\end{bmatrix},
W_2=
\begin{bmatrix}
0  & 0 & 0 \\
0  & 10.747 &  0 \\
0  & 0 &  0
\end{bmatrix}
\end{gather*}
and $ Y(x)=-\frac{1}{2}\rho(x)B^\top $ with $ \rho(x)=19.614+1.386x_1+9.616x_1^2 $. From \cite[Lemma 1]{Manchester:2017}, this metric is complete and thus a minimal geodesic exists for every pair of points. For online implementation, we use Chebyshev basis functions with maximal order $ N=4 $ to reconstruct the path $ c(t,\cdot) $. The toolbox for Chebyshev polynomials manipulation is called \emph{chebfun} \cite{Driscoll:2014}, which is an open source software.

For the nominal case, we compare the results of three different realizations: forward flow based dynamic controller \eqref{eq:pd-forward}, gradient flow based dynamic controller \eqref{eq:pd-nominal} and geodesic based static controller \eqref{eq:static-geo}. The initial condition and setpoint are chosen as $ x(0)=[9,9,9]^\top $ and $ x^*=[0,0,0]^\top $, respectively. From Fig.~\ref{fig:re-nm}, the Riemannian energy functional of the integral path $ c $ decays exponentially with rate of $ 2\lambda $ for these three controllers. The proposed approach converges to a geodesic within time of $ 0.05 $ by feeding the covariant derivative to the path dynamics. Without this term, the path in forward flow based approach does not converge to a geodesic, which leads to a larger overshoot estimation for exponential stability.  Fig.~\ref{fig:cmp-nm} depicts the time evolution of integral paths $ c(t,\cdot) $ and geodesics $ \gamma(t) $ for different controllers. Compared with the forward flow approach, given the same initial condition (a straight line), the state $ c(t,\cdot) $ of the proposed approach converges to the neighborhood of a geodesic $ \gamma(t) $. 

For the robust case where the dynamics of $ x_1 $ is perturbed by a persistent external disturbance $ d(t)=2 $, we test those three controllers using the same initial state and setpoint. Fig.~\ref{fig:cmp-rb} shows that the forward flow approach is unstable due to the lack of feedback, although the state prediction $ \hat{x}(t) $ converges to the setpoint. For the proposed approach, the state prediction $ \hat{x}(t) $ remains in a neighborhood of $ x(t) $ due to the feedback term in \eqref{eq:pd-robust}. And the closed-loop system has a similar response compared to the geodesic based approach.

%%%%%%%%%%%%%%%%%%%%%%%%%%%%%%%%%%%%%%%%%%%%%%%%%%%%%%%%%%%%%%%%%%%%%%%%%%%% 
\section{Conclusion}
In this paper we proposed a continuous-time dynamic realization for control contraction metrics based nonlinear stabilization. It distributes the online geodesic computation across the time domain. Both universal stability for the nominal system and robust stability for the perturbed system are guaranteed. Simulation results demonstrated the effectiveness of the proposed approach.
%\addtolength{\textheight}{-12cm}  
% This command serves to balance the column lengths
% on the last page of the document manually. It shortens
% the textheight of the last page by a suitable amount.
% This command does not take effect until the next page
% so it should come on the page before the last. Make
% sure that you do not shorten the textheight too much.

\section*{APPENDIX}
\begin{lem}\label{lem:1}
	For any $ c\in\Gamma(x^*,x) $ and any $ \tau\in(0,1) $, there exists a weighting function $ \eta:[0,1]\rightarrow [0,1] $ such that 
	\begin{equation}\label{eq:weighting-function}
	\begin{split}
	\int_{0}^{1}\eta(s)\|\nabla_{c_s}c_s\|_M^2ds\geq \tau\int_{0}^{1}\|\nabla_{c_s}c_s\|_M^2ds.
	\end{split}
	\end{equation}
\end{lem}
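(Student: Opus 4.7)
The plan is to use the absolute continuity of the Riemannian integral combined with a standard smooth bump-function construction. Set $g(s) := \|\nabla_{c_s}c_s\|_M^2$. Since $c$ is smooth on $[0,1]$, the metric $M$ is smooth and uniformly bounded, and the Christoffel symbols of $\nabla$ depend smoothly on $c$, the function $g$ is continuous on $[0,1]$ and hence integrable. If $g \equiv 0$ the claim is immediate, so assume $I := \int_0^1 g(s)\,ds > 0$.

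Next I would invoke continuity of the indefinite integral: the map $(s_0,s_1) \mapsto \int_{s_0}^{s_1} g(s)\,ds$ is continuous and converges to $I$ as $s_0 \to 0^+$ and $s_1 \to 1^-$. Therefore, given $\tau \in (0,1)$, one can choose $0 < s_0 < s_1 < 1$ such that
\begin{equation*}
\int_{s_0}^{s_1} g(s)\,ds \;\geq\; \tau \int_0^1 g(s)\,ds.
\end{equation*}
The choice of $(s_0,s_1)$ depends on the current path $c$, which matches the "online adjust" language used in the proof of Theorem~\ref{thm:nominal}.

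Finally, I would construct $\eta:[0,1]\to[0,1]$ to be any smooth function satisfying the boundary conditions $\eta(0)=\eta(1)=0$ inherited from $\alpha(0)=\alpha(1)=0$, with $\eta(s)=1$ on $[s_0,s_1]$ and monotone smooth transitions on $[0,s_0]$ and $[s_1,1]$ — a standard smoothed indicator (e.g.\ built from a mollified step function or a product of sigmoids). Since $\eta \geq 0$ everywhere and $\eta \equiv 1$ on $[s_0,s_1]$, we obtain
\begin{equation*}
\int_0^1 \eta(s)\,g(s)\,ds \;\geq\; \int_{s_0}^{s_1} g(s)\,ds \;\geq\; \tau \int_0^1 g(s)\,ds,
\end{equation*}
which is the desired inequality.

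There is no substantial obstacle here; the lemma is essentially a soft statement about concentrating a weight away from the endpoints without losing more than a factor $\tau$. The only mild subtlety is that the admissible weighting functions must vanish at the boundary (inherited from the requirement $\alpha(0)=\alpha(1)=0$ in the path dynamics), which is what prevents the trivial choice $\eta \equiv 1$; my construction handles this by placing the support of $1-\eta$ in arbitrarily thin neighborhoods of $\{0,1\}$, at the cost of letting $(s_0,s_1)$ depend on the path $c$.
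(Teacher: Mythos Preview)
Your proposal is correct and follows essentially the same approach as the paper: both arguments pick an interior subinterval $[s_0,s_1]\subset(0,1)$ carrying at least a fraction $\tau$ of the total mass of $\|\nabla_{c_s}c_s\|_M^2$, then build $\eta$ as a smooth plateau function equal to $1$ on $[s_0,s_1]$ and vanishing at the endpoints. The only cosmetic difference is that the paper locates $s_0,s_1$ explicitly via the level sets $\mu_c^{-1}((1\pm\tau)C/2)$ of the cumulative integral $\mu_c(s)=\int_0^s\|\nabla_{c_s}c_s\|_M^2\,d\mathfrak{s}$ and writes $\eta$ as a product of two smooth step functions, whereas you invoke continuity of the partial integral abstractly.
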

\begin{proof}
	We define $ \mu_c(s):=\int_{0}^{s}\|\nabla_{c_s}c_s\|_M^2d\mathfrak{s} $. Since $ M(x) $ is a uniformly bounded metric and $ c $ is a smooth curve, the covariant derivative $ \nabla_{c_s}c_s $ is smooth and bounded for any $ s\in[0,1] $. Thus, $ \mu_c $ is a nondecreasing function with $ \mu_c(0)=0 $ and $ \mu_c(1)=C<\infty $. If $ c $ is a geodesic (i.e., $ C=0 $), the weighting function $ \eta(s)=0 $ satisfies \eqref{eq:weighting-function}. Otherwise, for any $ \tau\in (0,1) $, we can find $ s_0=\arg\inf\mu_c^{-1}((1-\tau)C/2) $ and $ s_1=\arg\sup \mu_c^{-1}((1+\tau)C/2) $. It is easy to check that $ 0<s_0<s_1<1 $ and $ \mu_c(s_1)-\mu_c(s_0)\geq \tau C $ since $ \mu_c $ is nondecreasing. Now we choose the weighting function to be 
	\begin{equation}\label{eq:weighting-fun}
	\eta(s)=\pi_0^{s_0}(s)\left[1-\pi_{s_1}^1(s)\right]
	\end{equation} 
	where $ \pi_a^b: \R\rightarrow [0,1] $ is a smooth and nondecreasing such that $ \pi_a^b(s)=0,\forall s\leq a $ and $ \pi_a^b(s)=1,\forall s\geq b $. This weighting function satisfies \eqref{eq:weighting-function} as
	\[
	\begin{split}
	\int_{0}^{1}\eta(s)\|\nabla_{c_s}c_s\|_M^2ds \geq \int_{s_0}^{s_1}\|\nabla_{c_s}c_s\|_M^2ds \geq \tau C.
	\end{split}
	\] 
\end{proof}

%\section*{ACKNOWLEDGMENT}

% References
\bibliographystyle{IEEEtran}
\bibliography{ref}

\end{document}